\newcommand{\keywords}[1]{\par\addvspace\baselineskip\noindent\keywordname\enspace\ignorespaces#1}
\newtheorem{Proposition}{Proposition}
\begin{document}

\mainmatter  


\title{Space Saving by Dynamic Algebraization}


%
%
\author{Martin F\"{u}rer
\and Huiwen Yu}
%


\institute{Department of Computer Science and Engineering\\
 The Pennsylvania State University, University Park, PA, USA.\\
\emailpsu }


%
%
\maketitle

\begin{abstract}
Dynamic programming is widely used for exact computations based on tree decompositions of graphs. However, the space complexity is usually exponential in the treewidth. We study the problem of designing efficient dynamic programming algorithm based on tree decompositions in polynomial space. We show how to construct a tree decomposition and extend the algebraic techniques of Lokshtanov and Nederlof \cite{savespace2010} such that the dynamic programming algorithm runs in time $O^*(2^h)$, where $h$ is the maximum number of vertices in the union of bags on the root to leaf paths on a given tree decomposition, which is a parameter closely related to the tree-depth of a graph \cite{treedepth}. We apply our algorithm to the problem of counting perfect matchings on grids and show that it outperforms other polynomial-space solutions. We also apply the algorithm to other set covering and partitioning problems.

\keywords{Dynamic programming, tree decomposition, space-efficient algorithm, exponential time algorithms, zeta transform}
\end{abstract}

\section{Introduction}

Exact solutions to NP-hard problems typically adopt a branch-and-bound, inclusion/exclusion or dynamic programming framework. While algorithms based on branch-and-bound or inclusion/exclusion techniques \cite{polyspace13} have shown to be both time and space efficient, one problem with dynamic programming is that for many NP-hard problems, it requires exponential space to store the computation table. As in practice programs usually run out of space before they run out of time \cite{openproblem}, an exponential-space algorithm is considered not scalable. Lokshtanov and Nederlof \cite{savespace2010} have recently shown that algebraic tools like the zeta transform and M\"{o}bius inversion \cite{mobiusorigin,stanley2000enumerative} can be used to obtain space efficient dynamic programming under some circumstances. The idea is sometimes referred to as the coefficient extraction technique which also appears in \cite{spaceicalp08,spaceicalp09}.

The principle of space saving is best illustrated with the better known Fourier transform. Assume we want to compute a sequence of polynomial additions and multiplications modulo $x^n-1$. We can either use a linear amount of storage and do many complicated convolution operations throughout, or we can start and end with the Fourier transforms and do the simpler component-wise operations in between. Because we can handle one component after another, during the main computation, very little space is needed. This principle works for the zeta transform and subset convolution \cite{fouriermobius} as well.

In this paper, we study the problem of designing polynomial-space dynamic programming algorithms based on tree decompositions. Lokshtanov et al. \cite{kpath} have also studied polynomial-space algorithms based on tree decomposition. They employ a divide and conquer approach. For a general introduction of tree decomposition, see the survey \cite{discovertw}. It is well-known that dynamic programming has wide applications and produces prominent results on efficient computations defined on path decomposition or tree decomposition in general \cite{dptw}. Tree decomposition is very useful on low degree graphs as they are known to have a relatively low pathwidth \cite{pathwidth}. For example, it is known that any degree 3 graph of $n$ vertices has a path decomposition of pathwidth $\frac{n}{6}$. As a consequence, the minimum dominating set problem can be solved in time $O^*(3^{n/6})$\footnote{$O^*$ notation hides the polynomial factors of the expression.}, which is the best running time in this case \cite{mindominatingset}. However, the algorithm trades large space usage for fast running time.

To tackle the high space complexity issue, we extend the method of \cite{savespace2010} in a novel way to problems based on tree decompositions. In contrast to \cite{savespace2010}, here we do not have a fixed ground set and cannot do the transformations only at the beginning and the end of the computation. The underlying set changes continuously, therefore a direct application on tree decomposition does not lead to an efficient algorithm.
We introduce the new concept of zeta transforms for dynamic sets. Guided by a tree decomposition, the underlying set (of vertices in a bag) gradually changes. We adapt the transform so that it always corresponds to the current set of vertices. Herewith, we might greatly expand the applicability of the space saving method by algebraization.

We broadly explore problems which fit into this framework. Especially, we analyze the problem of counting perfect matchings on grids which is an interesting problem in statistical physics \cite{monomer}.
There is no previous theoretical analysis on the performance of any algorithm for counting perfect matchings on grids of dimension at least 3. We analyze two other natural types of polynomial-space algorithms, the branching algorithm and the dynamic programming algorithm based on path decomposition of a subgraph \cite{pathwidthsparsegraph}.  We show that our algorithm outperforms these two approaches. Our method is particularly useful when the treewidth of the graph is large. For example, grids, $k$-nearest-neighbor graphs \cite{knn} and low degree graphs are important graphs in practice with large treewidth. In these cases, the standard dynamic programming on tree decompositions requires exponential space.

The paper is organized as follows. In Section 2, we summarize the basis of tree decomposition and related techniques in \cite{savespace2010}. In Section 3, we present the framework of our algorithm. In Section 4, we study the problem of counting perfect matchings on grids and extend our algorithmic framework to other problems.

\section{Preliminaries}

\subsection{Saving space using algebraic transformations}

Lokshtanov and Nederlof \cite{savespace2010} introduce algebraic techniques to solve three types of problems. The first technique is using discrete Fourier transforms (DFT) on problems of very large domains, e.g., for the subset sum problem. The second one is using M\"{o}bius and zeta transforms when recurrences used in dynamic programming can be formulated as subset convolutions, e.g., for the unweighted Steiner tree problem. The third one is to solve the minimization version of the second type of problems by combining the above transforms, e.g., for the traveling salesman problem. To the interest of this paper, we explain the techniques used in the second type of problems.

Given a universe $V$, let $\mathcal{R}$ be a ring and consider functions from $2^V$ to $\mathcal{R}$. Denote the collection of such functions by $\mathcal{R}[2^V]$. A singleton $f_A[X]$ is an element of $\mathcal{R}[2^V]$ which is zero unless $X=A$. The operator $\oplus$ is the pointwise addition and the operator $\odot$ is the pointwise multiplication. We first define some useful algebraic transforms.

The {\it zeta transform} of a function $f\in \mathcal{R}[2^V]$ is defined to be
\begin{equation}
\label{zeta}
\zeta f[Y] = \sum_{X\subseteq Y} f[X].
\end{equation}

The {\it M\"{o}bius transform/inversion} \cite{mobiusorigin,stanley2000enumerative} of $f$ is defined to be
\begin{equation}
\label{mobius}
\mu f[Y]=\sum_{X\subseteq Y} (-1)^{|Y\setminus X|}f[X].
\end{equation}

The M\"{o}bius transform is the inverse transform of the zeta transform, as they have the following relation \cite{mobiusorigin,stanley2000enumerative}:
\begin{equation}
\label{zetamobius}
\mu(\zeta f)[X]=f[X].
\end{equation}

The high level idea of \cite{savespace2010} is that, rather than directly computing $f[V]$ by storing exponentially many intermediate results $\{f[S]\}_{S\subseteq V}$, they compute the zeta transform of $f[S]$ using only polynomial space. $f[V]$ can be obtained by M\"{o}bius inversion (\ref{mobius}) as $f[V]=\sum_{X\subseteq V} (-1)^{|V\setminus X|}(\zeta f)[X]$. Problems which can be solved in this manner have a common nature. They have recurrences which can be formulated by subset convolutions. The {\it subset convolution} \cite{fouriermobius} is defined to be
\begin{equation}
\label{subsetconvolution}
f*_{\mathcal{R}} g[X]=\sum_{X'\subseteq X} f(X')g(X\setminus X').
\end{equation}

To apply the zeta transform to $f*_{\mathcal{R}}g$, we need the {\it union product} \cite{fouriermobius} which is defined as
\begin{equation}
\label{unionproduct}
f*_u g[X]=\sum_{X_1\bigcup X_2=X} f(X_1)g(X_2).
\end{equation}

The relation between the union product and the zeta transform is as follows \cite{fouriermobius}:
\begin{equation}
\label{unionproductzeta}
\zeta(f*_u g)[X]= (\zeta f)\odot(\zeta g)[X].
\end{equation}

In \cite{savespace2010}, functions over $(\mathcal{R}[2^V];\oplus,*_{\mathcal{R}})$ are modeled by arithmetic circuits. Such a circuit is a directed acyclic graph where every node is either a singleton (constant gate), a $\oplus$ gate or a $*_{\mathcal{R}}$ gate.
Given any circuit $C$ over $(\mathcal{R}[2^V];\oplus,*_{\mathcal{R}})$ which outputs $f$, every gate in $C$ computing an output $a$ from its inputs $b,c$ is replaced by small circuits computing a relaxation $\{a^i\}_{i=1}^{|V|}$ of $a$ from relaxations $\{b^i\}_{i=1}^{|V|}$ and $\{c^i\}_{i=1}^{|V|}$ of $b$ and $c$ respectively. (A {\it relaxation} of a function $f\in\mathcal{R}[2^V]$ is a sequence of functions $\{f^i:f^i\in\mathcal{R}[2^V], 0\leq i\leq |V| \}$, such that $\forall i, X\subseteq V$, $f^i[X]=f[X]$ if $i=|X|$, $f^i[X]=0$ if $i<|X|$, and $f^i[X]$ is an arbitrary value if $i>|X|$.)
For a $\oplus$ gate, replace $a=b\oplus c$ by $a^i=b^i\oplus c^i$, for $0\leq i\leq |V|$. For a $*_{\mathcal{R}}$ gate, replace $a=b *_{\mathcal{R}} c$ by $a^i=\sum_{j=0}^i b^j *_u c^{i-j}$, for $0\leq i\leq |V|$. This new circuit $C_1$ over $(\mathcal{R}[2^V];\oplus,*_u)$ is of size $O(|C|\cdot |V|)$ and outputs $f_{|V|}[V]$. The next step is to replace every $*_u$ gate by a gate $\odot$ and every constant gate $a$ by $\zeta a$. It turns $C_1$ to a circuit $C_2$ over $(\mathcal{R}[2^V]; \oplus, \odot)$, such that for every gate $a\in C_1$, the corresponding gate in $C_2$ outputs $\zeta a$. Since additions and multiplications in $C_2$ are pointwise, $C_2$ can be viewed as $2^{|V|}$ disjoint circuits $C^Y$ over $(\mathcal{R}[2^V]; +, \cdot)$ for every subset $Y\subseteq V$. The circuit $C^Y$ outputs $(\zeta f)[Y]$. It is easy to see that the construction of every $C^Y$ takes polynomial time.

As all problems of interest in this paper work on the integer domain $\mathbb{Z}$, we consider $\mathcal{R}=\mathbb{Z}$ and replace $*_{\mathcal{R}}$ by $*$ for simplicity. Assume $0 \leq f[V] < m$ for some integer $m$, we can view the computation as on the finite ring $\mathbb{Z}_{m}$. Additions and multiplications can be implemented efficiently on $\mathbb{Z}_{m}$ (e.g., using the fast algorithm in \cite{multiplication} for multiplication).

\begin{theorem}[Theorem 5.1 \cite{savespace2010}]
\label{thmsavespace}
Let $C$ be a circuit over $(\mathbb{Z}[2^V]; \oplus, *)$ which outputs $f$. Let all constants in $C$ be singletons and let $f[V] < m$ for some integer $m$. Then $f[V]$ can be computed in time $O^*(2^{|V|})$ and space $O(|V||C|\log m)$.
\end{theorem}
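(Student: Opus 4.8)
The plan is to establish the theorem by the three-stage circuit transformation $C \to C_1 \to C_2$ outlined in the exposition above, checking that each stage preserves the intended output, decomposing $C_2$ into scalar circuits, and then evaluating these one subset at a time so that space can be reused. First I would verify correctness of the relaxation stage producing $C_1$ over $(\mathbb{Z}[2^V];\oplus,*_u)$. Replacing each gate $a=b\oplus c$ by $a^i=b^i\oplus c^i$ and each gate $a=b*c$ by $a^i=\sum_{j=0}^{i} b^j *_u c^{i-j}$ for $0\le i\le|V|$, I would prove by induction over the gates that $\{a^i\}_{i=0}^{|V|}$ is a genuine relaxation of the value $a$ at that gate, namely $a^i[X]=a[X]$ when $i=|X|$ and $a^i[X]=0$ when $i<|X|$. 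Singletons and $\oplus$-gates are immediate. The essential case is the product gate: for $i=|X|$ the expansion gives $a^i[X]=\sum_{j=0}^{i}\sum_{X_1\cup X_2=X} b^j[X_1]\,c^{i-j}[X_2]$, and a nonzero summand forces $j\ge|X_1|$ and $i-j\ge|X_2|$ by the relaxation property, while $X_1\cup X_2=X$ forces $|X_1|+|X_2|\ge|X|=i$, so all three inequalities are tight, giving $j=|X_1|$, $i-j=|X_2|$ and $X_1\cap X_2=\emptyset$. Hence the possibly-arbitrary large-index values never contribute, each surviving pair is a disjoint partition of $X$, and $a^i[X]=\sum_{X_1\sqcup X_2 = X}b[X_1]c[X_2]=(b*c)[X]$ by (\ref{subsetconvolution}); the case $i<|X|$ is analogous and yields $0$. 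Thus $C_1$ outputs the relaxation $\{f^i\}$ of $f$, with top component $f^{|V|}[V]=f[V]$, and since each of the $|C|$ gates is expanded into $O(|V|)$ copies, each product copy being a sum of $O(|V|)$ union products, $|C_1|=O(|C||V|)$.

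Next I would pass to $C_2$ over $(\mathbb{Z}[2^V];\oplus,\odot)$ by replacing every $*_u$-gate by $\odot$ and every constant gate $a$ by $\zeta a$. A second induction, using linearity of $\zeta$ for the $\oplus$-gates and the identity $\zeta(b*_u c)=(\zeta b)\odot(\zeta c)$ from (\ref{unionproductzeta}) for the product gates, shows that the gate of $C_2$ corresponding to a gate computing $a$ in $C_1$ outputs exactly $\zeta a$. Because $\oplus$ and $\odot$ are pointwise, $C_2$ splits into $2^{|V|}$ independent scalar circuits $C^Y$ over $(\mathbb{Z};+,\cdot)$, one per $Y\subseteq V$, where $C^Y$ evaluates the $Y$-coordinate of every gate and outputs $(\zeta f^{|V|})[Y]$. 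Each constant of $C^Y$ is $(\zeta a)[Y]$ for a singleton $a=f_A$, which equals the value of $a$ if $A\subseteq Y$ and $0$ otherwise, hence is computable in polynomial time, and $|C^Y|=O(|C_1|)=O(|C||V|)$.

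Finally I would assemble and bound the cost. By Möbius inversion (\ref{mobius}) applied to $f^{|V|}$, we have $f[V]=f^{|V|}[V]=\sum_{Y\subseteq V}(-1)^{|V\setminus Y|}(\zeta f^{|V|})[Y]$, so it suffices to iterate over all $2^{|V|}$ subsets $Y$, build and evaluate $C^Y$, and accumulate the signed outputs into a single running total. All arithmetic is carried out in $\mathbb{Z}_m$; since $0\le f[V]<m$, the reduction modulo $m$ returns the exact value. For the running time there are $2^{|V|}$ subsets, each requiring construction and evaluation of $C^Y$ in $O(|C||V|)$ arithmetic operations on integers bounded by $m$, each costing $\mathrm{polylog}(m)$, for a total of $O^*(2^{|V|})$. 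For the space, subsets are processed one at a time and evaluating a single $C^Y$ stores $O(|C^Y|)=O(|C||V|)$ intermediate values of $O(\log m)$ bits each, reused across iterations, together with the $O(\log m)$-bit accumulator, giving $O(|V||C|\log m)$ as claimed.

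I expect the main obstacle to be the correctness of the relaxation stage. The union product $*_u$ genuinely overcounts, since it sums over all coverings $X_1\cup X_2=X$ rather than disjoint partitions, and the relaxation deliberately leaves the high-index values $a^i[X]$ with $i>|X|$ unconstrained. The delicate point is to argue that the cardinality tracking cancels both defects simultaneously: at the evaluated index $i=|X|$ the combined constraints force disjointness and pin down the indices, so neither the overlapping covers nor the unconstrained garbage values ever contribute. Verifying this simultaneous filtering rigorously, and checking the boundary case $i<|X|$, is the heart of the argument; the zeta/Möbius bookkeeping and the complexity accounting are then routine.
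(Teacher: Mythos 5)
Your proposal is correct and follows essentially the same route as the paper's construction (which reproduces Lokshtanov--Nederlof): relax each gate to the sequence $\{a^i\}_{i=0}^{|V|}$ over $(\mathbb{Z}[2^V];\oplus,*_u)$, apply the zeta transform using $\zeta(b*_u c)=(\zeta b)\odot(\zeta c)$ to obtain pointwise circuits, split into $2^{|V|}$ scalar circuits $C^Y$ evaluated one at a time modulo $m$, and recover $f[V]=\sum_{Y\subseteq V}(-1)^{|V\setminus Y|}(\zeta f^{|V|})[Y]$ by M\"{o}bius inversion. Your induction at the product gates, showing that the cardinality constraints simultaneously kill the overlapping covers in $*_u$ and the unconstrained high-index values of the relaxation, is precisely the key verification underlying the paper's gate-replacement step, and your time and space accounting matches the stated bounds.
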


\subsection{Tree decomposition}

For any graph $G=(V,E)$, a {\it tree decomposition} of $G$ is a tree $\mathcal{T}=(V_{\mathcal{T}}, E_{\mathcal{T}})$ such that every node $x$ in $V_\mathcal{T}$ is associated with a set $B_x$ (called the bag of $x$) of vertices in $G$ and $\mathcal{T}$ has the following additional properties:

1. For any nodes $x, y$, and any node $z$ belonging to the path connecting $x$ and $y$ in $\mathcal{T}$, $B_x\cap B_y\subseteq B_z$.

2. For any edge $e=\{u, v\}\in E$, there exists a node $x$ such that $u, v\in B_x$. 

3. $\cup_{x\in V_{\mathcal{T}}} B_x = V$.

The {\it width} of a tree decomposition $\mathcal{T}$ is $\max_{x\in V_{\mathcal{T}}} |B_x|-1$. The {\it treewidth} of a graph $G$ is the minimum width over all tree decompositions of $G$. We reserve the letter $k$ for treewidth in the following context. Constructing a tree decomposition with minimum treewidth is an NP-hard problem. If the treewidth of a graph is bounded by a constant, a linear time algorithm for finding the minimum treewidth is known \cite{twconst}. An $O(\log n)$ approximation algorithm of the treewidth is given in \cite{twapproxlogn}. The result has been further improved to $O(\log k)$ in \cite{twapproxlogk}. There are also a series of works studying constant approximation of treewidth $k$ with running time exponential in $k$, see \cite{twconst} and references therein.

To simplify the presentation of dynamic programming based on tree decomposition, an arbitrary tree decomposition is usually transformed into a {\it nice} tree decomposition which has the following additional properties. A node in a nice tree decomposition has at most 2 children. Let $c$ be the only child of $x$ or let $c_1,c_2$ be the two children of $x$. Any node $x$ in a nice tree decomposition is of one of the following five types:
\begin{enumerate}
\item An {\it introduce vertex} node (introduce vertex $v$), where $B_x=B_c\cup\{v\}$.

\item An {\it introduce edge} node (introduce edge $e=\{u,v\}$), where $u,v\in B_x$ and $B_x=B_c$. We say that $e$ is associated with $x$.

\item A {\it forget vertex} node (forget vertex $v$), where $B_x=B_c\setminus \{v\}$.

\item A {\it join} node, where $x$ has two children and $B_x=B_{c_1}=B_{c_2}$.

\item A {\it leaf} node, a leaf of $\mathcal{T}$.
\end{enumerate}
For any tree decomposition, a nice tree decomposition with the same treewidth can be constructed in polynomial time \cite{nicetree}. Notice that an introduce edge node is not a type of nodes in a common definition of a nice tree decomposition. We can create an introduce edge node after the two endpoints are introduced. We further transform every leaf node and the root to a node with an empty bag by adding a series of introduce nodes or forget nodes respectively.

\section{Algorithmic framework}

We explain the algorithmic framework using the problem of counting perfect matchings based on tree decomposition as an example to help understand the recurrences. The result can be easily applied to other problems. A {\it perfect matching} in a graph $G=(V,E)$ is a collection of $|V|/2$ edges such that every vertex in $G$ belongs to exactly one of these edges.

Consider a connected graph $G$ and a nice tree decomposition $\mathcal{T}$ of treewidth $k$ on $G$. Consider a function $f\in \mathbb{Z}[2^V]$. Assume that the recurrence for computing $f$ on a join node can be formulated as a subset convolution, while on other types of tree nodes it is an addition or subtraction. We explain how to efficiently evaluate $f[V]$ on a nice tree decomposition by dynamic programming in polynomial space.
Let $\mathcal{T}_x$ be the subtree rooted at $x$. Let $T_x$ be the vertices contained in bags associated with nodes in $\mathcal{T}_x$ which are not in $B_x$. For any $X\subseteq B_x$, let $Y_X$ be the union of $X$ and $T_x$. For any $X\subseteq B_x$, let $f_x[X]$ be the number of perfect matchings in the subgraph $Y_X$ with edges introduced in $\mathcal{T}_x$.
As in the construction of Theorem \ref{thmsavespace}, we first replace $f_x$ by a relaxation $\{f_x^i\}_{0\leq i\leq k+1}$ of $f$, where $k$ is the treewidth. We then compute the zeta transform of $f_x^i$, for $0\leq i\leq k+1$. In the following context, we present only recurrences of $f_x$ for all types of tree nodes except the join node where we need to use the relaxations. The recurrences of $f_x$ based on $f_c$ can be directly applied to their relaxations with the same index as in Theorem \ref{thmsavespace}.

For any leaf node $x$, $(\zeta f_x)[\emptyset]=f_x[\emptyset]$ is a problem-dependent constant. In the case of the number of perfect matchings, $f_x[\emptyset]=1$. For the root $x$, $(\zeta f_x)[\emptyset] = f_x[\emptyset]= f[V]$ which is the value of interest. For the other cases, consider an arbitrary subset $X\subseteq B_x$.

1. $x$ is an introduce vertex node. If the introduced vertex $v$ is not in $X$, $f_x[X]=f_c[X]$. If $v\in X$, in the case of the number of perfect matchings, $v$ has no adjacent edges, hence $f_x[X]=0$ (for other problems, $f_x[X]$ may equal to $f_c[X]$, which implies a similar recurrence). By definition of the zeta transform, if $v\in X$, we have $(\zeta f_x)[X]=\sum_{v\in X'\subseteq X}f_x[X']+\sum_{v\notin X'\subseteq X}f_x[X']=\sum_{v\notin X'\subseteq X}f_x[X']$. Therefore,
\begin{eqnarray}
\label{introvertex}
(\zeta f_x)[X] =\left\{ \begin{array}{ll}
(\zeta f_c)[X] & \textrm{  $v\notin X$} \\
(\zeta f_c)[X\setminus\{v\}] & \textrm{  $v\in X$}
\end{array} \right.
\end{eqnarray}

2. $x$ is a forget vertex node. $f_x[X]=f_c[X\cup\{v\}]$ by definition.
\begin{eqnarray}
\label{forget}
    (\zeta f_x)[X]&=&\sum_{X'\subseteq X}f_x[X']=\sum_{X'\subseteq X}f_c[X'\cup\{v\}] \nonumber \\
    &=&(\zeta f_c)[X\cup\{v\}]-(\zeta f_c)[X].
\end{eqnarray}

3. $x$ is a join node with two children. By assumption, the computation of $f_x$ on a join node can be formulated as a subset convolution. We have
\begin{eqnarray}
\label{convolution}
   f_x[X]=\sum_{X'\subseteq X} f_{c_1}[X']f_{c_2}[X\setminus X']=f_{c_1}* f_{c_2}[X].
\end{eqnarray}
For the problem of counting perfect matchings, it is easy to verify that $f_x[X]$ can be computed using (\ref{convolution}). Let $f_x^i=\sum_{j=0}^i f_{c_1}^j*_u f_{c_2}^{i-j}$. We can transform the computation to
\begin{equation}
\label{join}
(\zeta f_x^i)[X]= \sum_{j=0}^i(\zeta f_{c_1}^j)[X]\cdot (\zeta f_{c_2}^{i-j})[X], \textrm{ for }0\leq i\leq k+1.
\end{equation}

4. $x$ is an introduce edge node introducing $e=\{u,v\}$. The recurrence of $f_x$ with respect to $f_c$ is problem-dependent. Since the goal of the analysis of this case is to explain why we need to modify the construction of an introduce edge node, we consider only the recurrence for the counting perfect matchings problem. In this problem, if $e\nsubseteq X$, $f_x[X]=f_c[X]$, then $(\zeta f_x)[X]  = (\zeta f_c)[X]$. If $e\subseteq X$, we can match $u$ and $v$ by $e$ or not use $e$ for matching, thus $f_x[X]=f_c[X]+f_c[X\setminus\{u,v\}]$. In this case, we have
\begin{eqnarray}
(\zeta f_x)[X] &=& \sum_{e\subseteq X'\subseteq X}f_x[X']+\sum_{e\nsubseteq X'\subseteq X}f_x[X'] = \sum_{e\subseteq X'\subseteq X}(f_c[X']+f_c[X'\setminus \{u,v\}]) \nonumber \\
&+&\sum_{e\nsubseteq X'\subseteq X}f_c[X'] = \sum_{X'\subseteq X}f_c[X']+\sum_{e\subseteq X'\subseteq X}f(X'\setminus\{u,v\}). \nonumber
\end{eqnarray}
Hence,
\begin{eqnarray}
\label{introedge}
(\zeta f_x)[X] =\left\{ \begin{array}{ll}
(\zeta f_c)[X] & \textrm{  $e\nsubseteq X$} \\
(\zeta f_c)[X]+(\zeta f_c)[X\setminus\{u,v\}] & \textrm{  $e\subseteq X$}
\end{array} \right.
\end{eqnarray}

In cases 2 and 4, we see that the value of $(\zeta f_x)[X]$ depends on the values of $\zeta f_c$ on two different subsets. We can visualize the computation along a path from a leaf to the root as a computation tree. This computation tree branches on introduce edge nodes and forget vertex nodes. Suppose along any path from the root to a leaf in $\mathcal{T}$, the maximum number of introduce edge nodes is $m'$ and the maximum number of forget vertex nodes is $h$. To avoid exponentially large storage for keeping partial results in this computation tree, we compute along every path from a leaf to the root in this tree. This leads to an increase of the running time by a factor of $O(2^{m'+h})$, but the computation is in polynomial space (explained in detail later). As $m'$ could be $\Omega(n)$, this could contribute a factor of $2^{\Omega(n)}$ to the time complexity. To reduce the running time, we eliminate the branching introduced by introduce edge nodes. On the other hand, the branching introduced by forget vertex nodes seems inevitable.

For any introduce edge node $x$ which introduces an edge $e$ and has a child $c$ in the original nice tree decomposition $\mathcal{T}$, we add an auxiliary child $c'$ of $x$, such that $B_{c'}=B_x$ and introduce the edge $e$ at $c'$. $c'$ is a special leaf which is not empty. We assume the evaluation of $\zeta f$ on $c'$ takes only polynomial time. For the counting perfect matchings problem, $f_{c'}[X]=1$ only when $X=e$ or $X=\emptyset$, otherwise it is equal to 0. Then $(\zeta f_{c'})[X]=2$ if $e\subseteq X$, otherwise $(\zeta f_{c'})[X]=1$. We will verify that this assumption is valid for other problems considered in the following sections. We call $x$ a {\it modified introduce edge} node and $c'$ an {\it auxiliary leaf}. As the computation on $x$ is the same as that on a join node, we do not talk about the computation on modified introduce edge nodes separately. \\

In cases 1 and 2, we observe that the addition operation is not a strictly pointwise addition as in Theorem \ref{thmsavespace}. This is because in a tree decomposition, the set of vertices on every tree node might not be the same. However, there is a one-to-one correspondence from a set $X$ in node $x$ to a set $X'$ in its child $c$. We call it a {\it relaxed pointwise addition} and denote it by $\oplus'$. Hence, $f$ can be evaluated by a circuit $C$ over $(\mathbb{Z}[2^V]; \oplus', *)$. We transform $C$ to a circuit $C_1$ over $(\mathbb{Z}[2^V]; \oplus', *_u)$, then to $C_2$ over $(\mathbb{Z}[2^V]; \oplus', \odot)$, following constructions in Theorem \ref{thmsavespace}.

In Theorem \ref{thmsavespace}, $C_2$ can be viewed as $2^{|V|}$ disjoint circuits. In the case of tree decomposition, the computation makes branches on a forget node. Therefore, we cannot take $C_2$ as $O(2^k)$ disjoint circuits. Consider a subtree $\mathcal{T}_x$ of $\mathcal{T}$ where the root $x$ is the only join node in the subtree. Take an arbitrary path from $x$ to a leaf $l$ and assume there are $h'$ forget nodes along this path. We compute along every path of the computation tree expanded by the path from $x$ to $l$, and sum up the result at the top. There are $2^{h'}$ computation paths which are independent. Hence we can view the computation as $2^{h'}$ disjoint circuits on $(\mathbb{Z}; +, \cdot)$. Assume the maximum number of forget nodes along any path from the root $x$ to a leaf in $\mathcal{T}_x$ is $h$ and there are $n_l$ leaves, the total computation takes at most $n_l\cdot 2^{h}$ time and in polynomial space.

In general, we proceed the computation in an in-order depth-first traversal on a tree decomposition $\mathcal{T}$. Every time we hit a join node $j$, we need to complete all computations in the subtree rooted at $j$ before going up. Suppose $j_{1},j_{2}$ are the closest join nodes in two subtrees rooted at the children of $j$ (if there is no other join node consider $j_1$ or $j_2$ to be empty). Assume there are at most $h_j$ forget nodes between $j,j_1$ and $j,j_2$. Let $T_x$ be the time to complete the computation of $(\zeta f_x)[X]$ at node $x$. We have $T_j\leq 2\cdot 2^{h_j}\cdot\max\{T_{j_1},T_{j_2}\})$. The modified edge node is a special type of join node. In this case, since one of its children $c_1$ is always a leaf, the running time only depends on the subtree rooted at $c_2$, thus similar to an introduce vertex node. Suppose there are $n_j$ join nodes and let $h$ be the maximum number of forget nodes along any path from the root to a leaf. By induction, it takes $2^{n_j}\cdot 2^h$ time to complete the computation on $\mathcal{T}$ and in polynomial space. Notice that $2^{n_j}$ is the number of leaves in $\mathcal{T}$, hence $2^{n_j}=O(|V|+|E|)$.

To summarize, we present the algorithm for the problem of counting perfect matchings based on a modified nice tree decomposition $\mathcal{T}$ in Algorithm 1.

\begin{algorithm}
\caption{Counting perfect matchings on a modified nice tree decomposition}
\begin{algorithmic}
\State {\bf Input}: a modified nice tree decomposition $\mathcal{T}$ with root $r$.
\State {\bf return} $(\zeta f)(r, \emptyset,0)$.
\Procedure {$(\zeta f)$}{$x$, $X$,$i$}. // $(\zeta f)(x, X,i)$ represents $(\zeta f_x^i)[X]$.
\State {\bf if} $x$ is a leaf: {\bf return} 1.
\State {\bf if} $x$ is an auxiliary leaf: {\bf return} 2 when $e\subseteq X$, otherwise 1.
\State {\bf if} $x$ is an introduce vertex node: {\bf return} $(\zeta f)(c, X,i)$ when $v\notin X$, or $(\zeta f)(c, X-\{v\},i)$ when $v\in X$.
\State {\bf if} $x$ is a forget vertex node: {\bf return} $(\zeta f)(c, X\cup\{v\},i)-(\zeta f)(c, X,i)$.
\State {\bf if} $x$ is a join node: {\bf return} $\sum_{j=0}^i(\zeta f)(c_1, X,j)\cdot (\zeta f)(c_2, X,i-j)$.
\EndProcedure
\end{algorithmic}
\end{algorithm}

For any tree decomposition $\mathcal{T}$ of a graph $G$, we can transform it to a modified nice tree decomposition $\mathcal{T}'$ with the convention that the root has an empty bag. In this way, the parameter $h$, the maximum number of forget nodes along any path from the root to a leaf in $\mathcal{T}'$ is equal to the maximum size of the union of all bags along any path from the root to a leaf in $\mathcal{T}$. We directly tie this number $h$ to the complexity of our algorithm. Let $h_m(G)$ be the minimum value of $h$ for all tree decompositions of $G$. We show that $h_m(G)$ is closely related to a well-known parameter, the {\it tree-depth} of a graph \cite{treedepth}.

\begin{definition}[tree-depth \cite{treedepth}]
Given a rooted tree $T$ with vertex set $V$, a closure of $T$, $clos(T)$ is a graph $G$ with the same vertex $V$, and for any two vertices $x,y\in V$ such that $x$ is an ancestor of $y$ in $T$, there is a corresponding edge $(x,y)$ in $G$. The tree-depth of $T$ is the height of $T$. The tree-depth of a graph $G$, $td(G)$ is the minimum height of trees $T$ such that $G\subseteq clos(T)$.
\end{definition}

\begin{Proposition}
For any connected graph $G$, $h_m(G) = td(G)$.
\end{Proposition}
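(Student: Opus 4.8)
The plan is to prove the two inequalities $h_m(G)\le td(G)$ and $td(G)\le h_m(G)$ separately. For the first, I would start from an optimal tree-depth decomposition, i.e. a rooted tree $T$ with $V(T)=V$, height $td(G)$, and $G\subseteq clos(T)$, and turn it into a tree decomposition of $G$ on the same underlying tree $T$ by letting $B_v$ be the set consisting of $v$ together with all its ancestors in $T$. The three axioms are routine to check: every vertex lies in its own bag; for an edge $\{u,w\}\in E\subseteq clos(T)$ the two endpoints are comparable in $T$, so the lower one's bag contains both; and the nodes whose bag contains a fixed vertex $w$ are exactly those in the subtree of $T$ rooted at $w$, which is connected. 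Since the bags are nested and increasing along any root-to-leaf path, the union of bags along the path to a leaf $\ell$ is simply $B_\ell$, whose size is the depth of $\ell$. Hence the maximum union over all root-to-leaf paths equals the height of $T$, giving $h_m(G)\le td(G)$.

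For the reverse inequality I would fix a tree decomposition $\mathcal{T}$ of $G$ realizing $h_m(G)=h$, root it arbitrarily, and prove $td(G)\le h$ by induction on $h$, using the recursive bound for tree-depth: for any $v\in V$ with $G-v$ having connected components $G_1,\dots,G_c$, taking $v$ as the root and attaching optimal elimination trees of the $G_i$ as its subtrees yields a single rooted tree witnessing $td(G)\le 1+\max_i td(G_i)$. First I would reduce to the case where the root bag $B_r$ is nonempty: if $B_r=\emptyset$, connectivity of $G$ forces all vertices to lie in a single child subtree (the nodes containing any given vertex form a connected subtree avoiding $r$), and replacing $\mathcal{T}$ by that subtree does not increase $h$. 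Now pick any $v\in B_r$ and, for each component $G_i$ of $G-v$, restrict $\mathcal{T}$ to $G_i$ by intersecting every bag with $V(G_i)$; this is again a valid tree decomposition. The key quantitative point is that along every root-to-leaf path the bag-union of the restricted decomposition equals the old bag-union intersected with $V(G_i)$, and since $v\in B_r$ lies in the union of every such path while $v\notin V(G_i)$, each union shrinks by at least one, so the restricted decomposition has parameter at most $h-1$. By the induction hypothesis $td(G_i)\le h-1$, and the recursive bound gives $td(G)\le 1+(h-1)=h$.

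The main obstacle lies entirely in the second direction, and it is the bookkeeping that guarantees the parameter drops by exactly one when passing to the components of $G-v$: this is precisely why $v$ must be chosen from the root bag (so that it appears in the bag-union of \emph{every} root-to-leaf path) and why the empty-root-bag reduction is needed beforehand. The remaining ingredients—that restricting a tree decomposition to an induced subgraph is again a tree decomposition, and that hanging the component elimination trees under a common root $v$ realizes $td(G)\le 1+\max_i td(G_i)$—are standard and can be verified directly against the axioms and the definition of $clos$. I would also state explicitly the convention that the height of $T$ counts the vertices on the longest root-to-leaf path (so a single vertex has tree-depth $1$), since this is what makes the two constructions match without an off-by-one discrepancy.
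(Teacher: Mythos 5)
Your proof is correct, and while your first direction coincides with the paper's, your second direction takes a genuinely different route. For $h_m(G)\le td(G)$ you use exactly the paper's construction: bags $B_v=\{v\}\cup\mathrm{anc}(v)$ on the tree $T$ itself, with the observation that bags are nested along root-to-leaf paths so the path-union is just the leaf's bag. For $td(G)\le h_m(G)$, however, the paper argues directly and constructively: it passes to a modified nice tree decomposition (root bag empty), notes that each vertex is forgotten exactly once, and contracts the decomposition to its forget nodes, so that the forget node for $x$ becomes the tree vertex $x$; one then checks $G\subseteq clos(T_f)$ and reads off that the depth of $T_f$ is the number of forget nodes on a root-to-leaf path, i.e.\ at most $h$. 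You instead induct on $h$ using the recursive characterization $td(G)\le 1+\max_i td(G_i)$ for the components $G_i$ of $G-v$, choosing $v$ in the (nonempty, after your reduction) root bag so that $v$ lies in the bag-union of \emph{every} root-to-leaf path and the restricted decompositions of the $G_i$ have parameter at most $h-1$. Your argument is more self-contained in that it works with raw tree decompositions and never invokes nice tree decompositions or forget nodes, at the cost of the empty-root-bag reduction, the restriction lemma, and the recursive tree-depth bound as supporting steps; the paper's contraction argument is shorter and exhibits the elimination tree explicitly, but relies on the modified-nice-decomposition machinery it has already set up (in particular the convention that the root bag is empty, which makes the forget-node count equal the bag-union size). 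Your explicit remark about the vertex-counting height convention is a good catch, since the paper leaves this implicit and it is exactly what prevents an off-by-one mismatch between the two directions.
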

\begin{proof}
For any tree decomposition of $G$, we first transform it to a modified nice tree decomposition $\mathcal{T}$. We contract $\mathcal{T}$ by deleting all nodes except the forget nodes. Let $T_f$ be this contracted tree such that for every forget node in $\mathcal{T}$ which forgets a vertex $x$ in $G$, the corresponding vertex in $T_f$ is $x$. We have $G\subseteq clos(T_f)$. Therefore, $td(G)\leq h$, here $h$ is the maximum number of forget nodes along any path from the root to a leaf in $\mathcal{T}$.

For any tree $T$ such that $G\subseteq clos(T)$, we construct a corresponding tree decomposition $\mathcal{T}$ of $G$ such that, $\mathcal{T}$ is initialized to be $T$ and every bag associated with the vertex $x$ of $T$ contains the vertex itself. For every vertex $x\in T$, we also put all ancestors of $x$ in $T$ into the bag associated with $x$. It is easy to verify that it is a valid tree decomposition of $G$. Therefore, the tree-depth of $T$, $td(T)\geq h_m(G)$. $\square$
\end{proof}

In the following context, we also call the parameter $h$, the maximum size of the union of all bags along any path from the root to a leaf in a tree decomposition $\mathcal{T}$, the tree-depth of $\mathcal{T}$.
Let $k$ be the treewidth of $G$, it is shown in \cite{treedepth} that $td(G)\leq (k+1)\log |V|$. Therefore, we also have $h_m(G)\leq (k+1)\log |V|$. Moreover, it is obvious to have $h_m(G)\geq k+1$.

Finally, we summarize the main result of this section in the following theorem.
\begin{theorem}
Given any graph $G=(V, E)$ and tree decomposition $\mathcal{T}$ on $G$. Let $f$ be a function evaluated by a circuit $C$ over $(\mathbb{Z}[2^V]; \oplus', \ast)$ with constants being singletons. Assume $f[V]< m$ for integer $m$. We can compute $f[V]$ in time $O^*((|V|+|E|)2^{h})$ and in space $O(|V||C|\log m)$. Here $h$ is the maximum size of the union of all bags along any path from the root to a leaf in $\mathcal{T}$.
\end{theorem}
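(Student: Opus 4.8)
The plan is to assemble the machinery built up in this section inside the circuit transformation of Theorem~\ref{thmsavespace}. First I would preprocess the given $\mathcal{T}$ into a modified nice tree decomposition $\mathcal{T}'$ with an empty-bag root: apply the polynomial-time nice-tree construction of \cite{nicetree}, pad the root and each leaf with introduce/forget nodes so their bags are empty, and replace every introduce edge node by the auxiliary-leaf gadget, so that each node is one of a leaf, an auxiliary leaf, an introduce vertex, a forget vertex, a join, or a modified introduce edge node (handled exactly as a join). Under this convention the parameter $h$ coincides with the maximum number of forget nodes on any root-to-leaf path of $\mathcal{T}'$, since each forget node removes exactly one vertex from the union of bags seen along the path.

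Next I would carry out the relaxation and zeta transform of the circuit $C$ precisely as in Theorem~\ref{thmsavespace}, with the single modification that the pointwise additions become the relaxed pointwise additions $\oplus'$ to account for the bag changing by one vertex between a node and its child. This converts $C$ into a circuit $C_2$ over $(\mathbb{Z}[2^V]; \oplus', \odot)$ in which, for every gate $a$ of the intermediate circuit, the transformed gate outputs $\zeta a$. The recurrences (\ref{introvertex}), (\ref{forget}), (\ref{join}) and (\ref{introedge}) are then exactly the local rules that compute $(\zeta f_x^i)[X]$ from the transformed values at the children, for each node type and each relaxation index $0 \le i \le k+1$.

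The main departure from Theorem~\ref{thmsavespace} is that $C_2$ no longer splits into $2^{|V|}$ independent pointwise circuits, because the forget-vertex rule (\ref{forget}) couples the value at $X$ with the value at $X\cup\{v\}$. I would therefore evaluate $(\zeta f)(r,\emptyset,0)$ recursively as in Algorithm~1, unfolding the induced computation tree depth-first and discarding each branch before starting the next, so that only one root-to-leaf sequence of partial values is stored at any time. Each forget node spawns two recursive calls, while a join node recurses into both children for the same $X$ (and the $O(k)$ relaxation indices), so the only exponential blow-up along a branch comes from its at most $h$ forget nodes; this is captured by the per-join recurrence $T_j \le 2\cdot 2^{h_j}\max\{T_{j_1},T_{j_2}\}$, whose solution over the tree of join nodes contributes the factor $2^{h}$. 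Multiplying by the number of leaves of $\mathcal{T}'$ — which is $O(|V|+|E|)$, one auxiliary leaf per introduced edge plus $O(|V|)$ leaves from the nice-tree construction — gives total time $O^*((|V|+|E|)2^{h})$. The space bound $O(|V||C|\log m)$ follows as in Theorem~\ref{thmsavespace}: the depth-first traversal keeps alive only one path of gate values, the relaxation index contributes the factor $|V|$, the circuit contributes $|C|$, and each value lies in $\mathbb{Z}_m$ and needs $O(\log m)$ bits.

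The step I expect to be the main obstacle is controlling the branching, and more precisely arguing that it is bounded by $2^{h}$ rather than $2^{\Omega(n)}$. The coupling of distinct subsets at forget nodes is exactly what breaks the clean pointwise decomposition of \cite{savespace2010}, and a naive accounting would additionally branch at every introduce edge node, contributing a factor that could be $2^{\Omega(n)}$ along a single path. The decisive idea is that this second, harmful source of branching is eliminated by the auxiliary-leaf gadget, which turns each introduce edge node into a join whose second child is a leaf evaluable in polynomial time; what remains is only forget-node branching, bounded by $h$ per segment, which is what yields the stated complexity.
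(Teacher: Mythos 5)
Your proposal is correct and follows essentially the same route as the paper: transform $\mathcal{T}$ into a modified nice tree decomposition with auxiliary leaves replacing introduce edge nodes, apply the relaxation/zeta-transform circuit construction of Theorem~\ref{thmsavespace} with $\oplus'$, and evaluate depth-first so that branching occurs only at forget nodes, using the per-join recurrence $T_j \le 2\cdot 2^{h_j}\max\{T_{j_1},T_{j_2}\}$ together with the $O(|V|+|E|)$ bound on the number of leaves. You correctly identify the auxiliary-leaf gadget as the decisive step eliminating the potentially $2^{\Omega(n)}$ introduce-edge branching, which is exactly the paper's key observation.
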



\section{Counting perfect matchings}

The problem of counting perfect matchings is $\sharp$P-complete. It has long been known that in a bipartite graph of size $2n$, counting perfect matchings takes $O^*(2^n)$ time using the inclusion and exclusion principle. A recent breakthrough \cite{pmasryser} shows that the same running time is achievable for general graphs. For low degree graphs, improved results based on dynamic programming on path decomposition on a sufficiently large subgraph are known \cite{matchingpwsubgraph}.

Counting perfect matchings on grids is an interesting problem in statistical physics \cite{monomer}. The more generalized problem is the Monomer-Dimer problem \cite{monomer}, which essentially asks to compute the number of matchings of a specific size. We model the Monomer-Dimer problem as computing the matching polynomial problem . For grids in dimension 2, the pure Dimer (perfect matching) problem is polynomial-time tractable and an explicit expression of the solution is known \cite{matching2dim}. We consider the problem of counting perfect matchings in cube/hypercube in Section 4.1. Results on counting perfect matchings in more general grids, computing the matching polynomial and applications to other set covering and partitioning problems are presented in Section 4.2.

\subsection{Counting perfect matchings on cube/hypercube}

We consider the case of counting perfect matchings on grids of dimension $d$, where $d\geq 3$ and the length of the grid is $n$ in each dimension. We denote this grid by $G_d(n)$. To apply Algorithm 1, we first construct a balanced tree decomposition on $G_d(n)$ with the help of balanced separators. The balanced tree decomposition can easily be transformed into a modified nice tree decomposition.

\textbf{Tree decomposition using balanced vertex separators.} We first explain how to construct a balanced tree decomposition using vertex separators of general graphs. An $\alpha$-balanced vertex separator of a graph/subgraph $G$ is a set of vertices $S\subseteq G$, such that after removing $S$, $G$ is separated into two disjoint parts $A$ and $B$ with no edge between $A$ and $B$, and $|A|, |B|\leq \alpha|G|$, where $\alpha$ is a constant in $(0,1)$.
Suppose we have an oracle to find an $\alpha$-balanced vertex separator of a graph. We begin with creating the root of a tree decomposition $\mathcal{T}$ and associate the vertex separator $S$ of the whole graph with the root. Consider a subtree $\mathcal{T}_x$ in $\mathcal{T}$ with the root $x$ associated with a bag $B_x$. Denote the vertices belonging to nodes in $\mathcal{T}_x$ by $V_x$. Initially, $V_x=V$ and $x$ is the root of $\mathcal{T}$. Suppose we have a vertex separator $S_x$ which partitions $V_x$ into two disjoint parts $V_{c_1}$ and $V_{c_2}$. We create two children $c_1,c_2$ of $x$, such that the set of vertices belonging to $\mathcal{T}_{c_i}$ is $S_x\cup V_{c_i}$. Denote the set of vertices belonging to nodes in the path from $x$ to the root of $\mathcal{T}$ by $U_x$, we define the bag $B_{c_i}$ to be $S_x\cup (V_{c_i}\cap U_x)$, for $i=1,2$. It is easy to verify that this is a valid tree decomposition. Since $V_x$ decreases by a factor of at least $1-\alpha$ in each partition, the height of the tree is at most $\log_{\frac{1}{1-\alpha}} n$. To transform this decomposition into a modified nice tree decomposition, we only need to add a series of introduce vertex nodes, forget vertex nodes or modified introduce edge nodes between two originally adjacent nodes. We call this tree decomposition algorithm {\bf Algorithm 2}. \\

We observe that after the transformation, the number of forget nodes from $B_{c_i}$ to $B_x$ is the size of the balanced vertex separator of $V_x$, i.e. $|S_x|$. Therefore, the number of forget nodes from the root to a leaf is the sum of the sizes of the balanced vertex separators used to construct this path in the tree decomposition.

A grid graph $G_d(n)$ has a nice symmetric structure. Denote the $d$ dimensions by $x_1,x_2,...,x_d$ and consider an arbitrary subgrid $G'_d$ of $G_d(n)$ with length $n'_i$ in dimension $x_i$. The hyperplane in $G_d'$ which is perpendicular to $x_i$ and cuts $G'_d$ into halves can be used as a $1/2$-balanced vertex separator. We always cut the dimension with the longest length. If $n_i'=n_{i+1}'$, we choose to first cut the dimension $x_i$, then $x_{i+1}$. We illustrate the construction of the 2-dimensional case in the following example.

\begin{example}[Balanced tree decomposition on $G_2(n)$]
\label{exp2dgrid}
The left picture is a partitioning on a 2-dimensional grid. We always bipartition the longer side of the grid/subgrid. The right picture is the corresponding balanced tree decomposition on this grid. The same letters on both sides represent the same set of nodes. $P_i$ represent a balanced vertex separator. We denote the left/top half of $P_i$ by $P_{i1}$, and the right/bottom part by $P_{i2}$ (see Figure \ref{example2dgrid}). The treewidth of this decomposition is $\frac{3}{2}n$.
\begin{figure}[!t]
\centering
\includegraphics[width=5in]{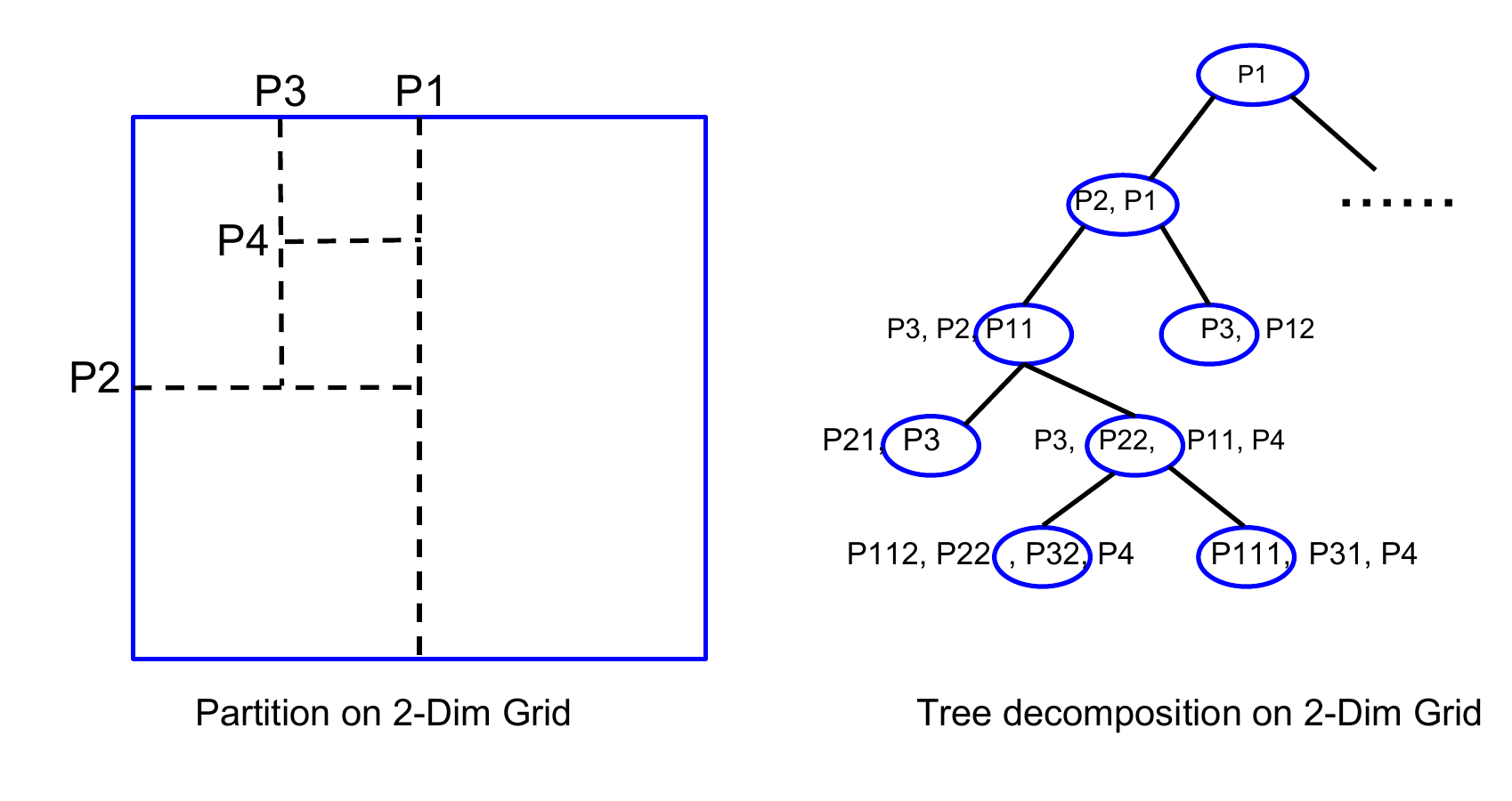}
\caption{An illustrative figure for balanced tree decomposition on $G_2(n)$.}
\label{example2dgrid}
\end{figure}
\end{example}

To run Algorithm 2 on $G_d(n)$, we cut dimensions $x_1,x_2,...,x_d$ consecutively with separators of size $\frac{1}{2^{i-1}}n^{d-1}$, for $i=1,2...,d$. Then we proceed with subgrids of length $n/2$ in every dimension. It is easy to see that the treewidth of this tree decomposition is $\frac{3}{2}n^{d-1}$. The tree-depth $h$ of this tree decomposition is at most $\sum_{j=0}^{\infty}\sum_{i=0}^{d-1} \frac{1}{2^i}\cdot (\frac{1}{2^j}n)^{d-1}$, which is $\frac{2^d-1}{2^{d-1}-1}n^{d-1}$.

\begin{lemma}
The treewidth of the tree decomposition $\mathcal{T}$ on $G_d(n)$ obtained by Algorithm 2 is $\frac{3}{2}n^{d-1}$. The tree-depth of $\mathcal{T}$ is at most $\frac{2^d-1}{2^{d-1}-1}n^{d-1}$.
\end{lemma}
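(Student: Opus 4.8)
The plan is to analyze Algorithm 2 on $G_d(n)$ in two independent stages: first pin down the size of the separator used at each cut, then read off the treewidth (the largest bag) and the tree-depth (the union of bags along a root-to-leaf path).

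First I would record the separator sizes. A full round of the recursion consists of $d$ consecutive bisecting cuts, one perpendicular to each of $x_1,\dots,x_d$. When we are about to make the $i$-th cut of the first round, the previous $i-1$ cuts have already halved dimensions $x_1,\dots,x_{i-1}$, so the current sub-box has side $n/2$ in those $i-1$ directions and side $n$ in the remaining ones. The hyperplane perpendicular to $x_i$ therefore contains $(n/2)^{i-1}\cdot n^{d-i}=\frac{1}{2^{i-1}}n^{d-1}$ vertices; call this separator $S^{(i)}$. After the $d$-th cut the sub-box is a cube of side $n/2$ and the recursion repeats with every quantity scaled by $2^{d-1}$.

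For the treewidth I would bound the largest bag once the decomposition is made nice. Using the bag rule $B_{c_i}=S_x\cup(V_{c_i}\cap U_x)$, a separating hyperplane is retained in a bag only on the fraction that still borders the current sub-box and is still shared with an ancestor; the rest is forgotten as soon as we descend into one side. Hence the transient peak, reached while a new separator has been introduced but the now-obsolete part of the previous one has not yet been forgotten, is largest at the second cut, where the full first hyperplane ($n^{d-1}$ vertices) coexists with the full second one ($\tfrac12 n^{d-1}$ vertices), giving a bag of $\tfrac32 n^{d-1}$ vertices (up to a lower-order ridge overlap). Every later transition of the round adds a smaller separator to an already-reduced bag (e.g. at most $\tfrac54 n^{d-1}$ at the third cut, strictly less afterwards), and later rounds shrink everything by $2^{d-1}$, so the maximum bag size is $\tfrac32 n^{d-1}$ and the treewidth is $\tfrac32 n^{d-1}$.

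For the tree-depth I would use the Proposition: the number $h$ of forget nodes on a root-to-leaf path equals the size of the union of all bags along that path, and these bags consist exactly of the separator vertices accumulated along the path, so $h$ is at most the sum of the separator sizes used. In a round on a sub-box of side $s$ the $d$ cuts contribute $\sum_{i=1}^{d}\frac{1}{2^{i-1}}s^{d-1}=s^{d-1}\,\frac{2^d-1}{2^{d-1}}$ vertices. Setting $s=n/2^r$ for round $r$ and summing the resulting geometric series over all rounds $r\ge 0$ gives
\[
\frac{2^d-1}{2^{d-1}}\,n^{d-1}\sum_{r\ge 0}\frac{1}{2^{r(d-1)}}=\frac{2^d-1}{2^{d-1}}\,n^{d-1}\cdot\frac{2^{d-1}}{2^{d-1}-1}=\frac{2^d-1}{2^{d-1}-1}\,n^{d-1},
\]
which is the claimed bound; the bound is of the \emph{at most} type both because the true recursion terminates after finitely many rounds and because overlaps of the hyperplanes make the union strictly smaller than the sum of the separator sizes. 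I expect the tree-depth half to be a routine geometric-series computation, while the main obstacle is the treewidth bookkeeping: one must argue precisely which fraction of each earlier hyperplane survives in a bag, and verify that introducing a new separator before discarding the obsolete part of the previous one is the binding configuration, so that the peak $\tfrac32 n^{d-1}$ at the second cut is never exceeded at any other cut or in any later round.
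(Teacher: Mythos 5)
Your proposal is correct and follows essentially the same route as the paper: the paper likewise records the separator sizes $\frac{1}{2^{i-1}}n^{d-1}$ for the $d$ cuts of a round, asserts the treewidth bound of $\frac{3}{2}n^{d-1}$ as ``easy to see,'' and obtains the tree-depth as the double geometric sum $\sum_{j\geq 0}\sum_{i=0}^{d-1}\frac{1}{2^i}\bigl(\frac{n}{2^j}\bigr)^{d-1}=\frac{2^d-1}{2^{d-1}-1}n^{d-1}$, identical to your per-round sum over rounds of side $n/2^r$. Your explicit bag-size bookkeeping for the treewidth (peak $\frac{3}{2}n^{d-1}$ at the second cut, $\frac{5}{4}n^{d-1}$ at the third, geometric decay across rounds) merely fleshes out the step the paper leaves to the reader.
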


To apply Algorithm 1 to the problem of counting perfect matchings, we verify that $f[S]\leq {|E|\choose |V|/2}\leq |E|^{|V|/2}$ and all constants are singletons.

\begin{theorem}
The problem of counting perfect matchings on grids of dimension $d$ and uniform length $n$ can be solved in time $O^*(2^{\frac{2^d-1}{2^{d-1}-1}n^{d-1}})$ and in polynomial space.
\end{theorem}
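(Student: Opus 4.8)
The plan is to obtain this theorem as a corollary of the general framework theorem at the end of Section~3 together with the preceding Lemma bounding the tree-depth of the decomposition produced by Algorithm~2. First I would instantiate $G = G_d(n)$, so that $|V| = n^d$ and $|E| = \Theta(d\,n^d)$; for fixed $d$ both quantities are polynomial in $n$, hence the factor $(|V|+|E|)$ appearing in the framework running time $O^*((|V|+|E|)\,2^h)$ is itself a polynomial factor that $O^*$ absorbs. The same holds for the number of leaves $2^{n_j} = O(|V|+|E|)$ used in the complexity analysis of Section~3, so no branching overhead beyond the $2^h$ term survives.

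The second step is to check the hypotheses of the framework theorem. I would verify that counting perfect matchings is evaluated by a circuit $C$ over $(\mathbb{Z}[2^V]; \oplus', \ast)$ whose constants are singletons: the recurrences (\ref{introvertex})--(\ref{introedge}) derived in Section~3 exhibit exactly this form, with the join and modified-introduce-edge nodes realized by the subset convolution $\ast$ (yielding the pointwise product (\ref{join}) after algebraization), the introduce- and forget-vertex nodes realized by the relaxed pointwise addition $\oplus'$, and the leaf and auxiliary-leaf values $1$ and $2$ serving as the singleton constants. To bound the ring modulus I would use $f[V] \leq {|E|\choose |V|/2} \leq |E|^{|V|/2}$, so taking $m = |E|^{|V|/2}$ gives $\log m = O(|V|\log|E|)$, which is polynomial; consequently the space bound $O(|V||C|\log m)$ is polynomial in $n$, since $|C|$ is polynomial in the size of $\mathcal{T}$.

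The third step supplies the decomposition: run Algorithm~2 on $G_d(n)$ and convert it into a modified nice tree decomposition. By the Lemma its tree-depth satisfies $h \leq \frac{2^d-1}{2^{d-1}-1}n^{d-1}$. Substituting this $h$ into the framework bound yields running time $O^*((|V|+|E|)\,2^{h}) = O^*(2^{\frac{2^d-1}{2^{d-1}-1}n^{d-1}})$ together with polynomial space, which is precisely the claim.

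I do not anticipate a genuine obstacle, since the statement is a composition of already-established pieces; the only points deserving care are bookkeeping ones. Specifically, I would confirm that every polynomial factor --- namely $(|V|+|E|)$, the leaf count $2^{n_j}$, the circuit size $|C|$, and $\log m$ --- is truly polynomial in $n$ for fixed $d$, so that it disappears into $O^*$ or into the polynomial-space guarantee, and I would confirm that the auxiliary-leaf device used in Section~3 to eliminate the branching of introduce-edge nodes applies to the matching problem, which is exactly the validity of the value $(\zeta f_{c'})[X]$ computed there.
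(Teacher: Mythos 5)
Your proposal is correct and follows essentially the same route as the paper: the theorem is obtained there exactly as you describe, by combining the tree-depth bound $h \leq \frac{2^d-1}{2^{d-1}-1}n^{d-1}$ for Algorithm~2's decomposition of $G_d(n)$ with the Section~3 framework theorem, after checking that all constants are singletons and that $f[V]\leq {|E|\choose |V|/2}\leq |E|^{|V|/2}$ so that $\log m$ is polynomial. Your additional bookkeeping (absorbing $|V|+|E|$, the leaf count, and $|C|$ into the $O^*$ and polynomial-space bounds) is exactly the verification the paper leaves implicit.
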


To the best of our knowledge, there is no rigorous time complexity analysis of the counting perfect matchings problem in grids in the literature. To demonstrate the efficiency of Algorithm 1, we compare it to three other natural algorithms.

\textbf{1. Dynamic programming based on path decomposition.} A path decomposition is a special tree decomposition where the underlying tree is a path. A path decomposition with width $2n^{d-1}$ is obtained by putting all vertices with $x_1$ coordinate equal to $j$ and $j+1$ into the bag of node $j$, for $j=0,1,...,n-1$. A path decomposition with a smaller pathwidth of $n^{d-1}$ can be obtained as follows. Construct $n$ nodes $\{p_1,p_2,...,p_n\}$ associated with a bag of vertices with $x_1$ coordinate equal to $j$, for $j=0,1,...,n-1$. For any $p_j,p_{j+1}$, start from $p_j$, add a sequence of nodes by alternating between adding a vertex of $x_1=j+1$ and deleting its neighbor with $x_1=j$. The number of nodes increases by a factor of $n^{d-1}$ than the first path decomposition. We run the standard dynamic programming on the second path decomposition. This algorithm runs in time $O^*(2^{n^{d-1}})$, however the space complexity is $O^*(2^{n^{d-1}})$. It is of no surprise that it has a better running time than Algorithm 1 due to an extra space usage. We remark that van Rooij et al. \cite{dpgeneralsubsetconvolution} give a dynamic programming algorithm for the counting perfect matching problem on any tree decomposition of treewidth $k$ with running time $O^*(2^k)$ and space exponential to $k$. \\


\textbf{2. Dynamic programming based on path decomposition on a subgrid.} One way to obtain a polynomial space dynamic programming is to construct a low pathwidth decomposition on a sufficiently large subgraph. One can then run dynamic programming on this path decomposition and do an exhaustive enumeration on the remaining graph in a similar way as in \cite{matchingpwsubgraph}. To extract from $G_d(n)$ a subgrid of pathwidth $O(\log n)$ (notice that this is the maximum pathwidth for a polynomial space dynamic programming algorithm), we can delete a portion of vertices from $G_d(n)$ to turn a "cube"-shaped grid into a long "stripe" with $O(\log n)$ cross-section area. It is sufficient to remove $O(\frac{n^d}{(\log n)^{1/(d-1)}})$ vertices. This leads to a polynomial-space algorithm with running time $2^{O(\frac{n^d}{(\log n)^{1/(d-1)}})}$, which is worse than Algorithm 1. \\

\textbf{3. Branching algorithm.} A naive branching algorithm starting from any vertex in the grid could have time complexity $2^{O(n^d)}$ in the worst case. We analyze a branching algorithm with a careful selection of the starting point. The branching algorithm works by first finding a balanced separator $S$ and partitioning the graph into $A\cup S\cup B$. The algorithm enumerates every subset $X\subseteq S$. A vertex in $X$ either matches to vertices in $A$ or to vertices in $B$ while vertices in $S\setminus X$ are matched within $S$. Then the algorithm recurses on $A$ and $B$. Let $T_d(n)$ be the running time of this branching algorithm on $G_d(n)$. We use the same balanced separator as in Algorithm 2. We have an upper bound of the running time as, $T_d(n)\leq 2T_{d}(\frac{n-|S|}{2})\sum_{X\subseteq S} 2^{|X|} T_{d-1}(|S\setminus X|)$. We can use any polynomial space algorithm to count perfect matchings on $S\setminus X$. For example using Algorithm 1, since the separator is of size $O(n^{d-1})$, we have $T_{d-1}(|S\setminus X|)=2^{O(n^{d-2})}$. Therefore, $T_d(n)\leq 2T_{d}(\frac{n}{2})\cdot 2^{o(n^{d-1})}\sum_{i=0}^{|S|}{|S|\choose i} 2^{i}=2T_{d}(\frac{n}{2})\cdot 2^{o(n^{d-1})} 3^{|S|}$. We get $T_d(n) = O^*(3^h)$, i.e. $O^*(3^{\frac{2^d-1}{2^{d-1}-1}n^{d-1}})$, which is worse than Algorithm 1. We remark that this branching algorithm can be viewed as a divide and conquer algorithm on balanced tree decomposition, which is similar as in \cite{kpath}.

\subsection{Extensions}

\textbf{Counting perfect matchings on general grids. }
Consider more general grids of dimension $d$ with each dimension of length $n_i$, $1\leq i\leq d$, which is at most $n_m$. We use Algorithm 2 to construct a balanced tree decomposition $\mathcal{T}$ of a general grid and obtain an upper bound of the tree-depth $h$ of $\mathcal{T}$. 
\begin{lemma}
\label{lemmagrid}
Given any grid of dimension $d$ and volume $\mathcal{V}$. Using Algorithm 2, the tree-depth of this tree decomposition is at most $\frac{3d\mathcal{V}}{n_m}$.
\end{lemma}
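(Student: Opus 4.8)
The plan is to reduce the claim to a one-dimensional estimate along a single root-to-leaf path and then sum a geometric series. As observed just before the lemma, the tree-depth of the decomposition produced by Algorithm 2 equals the maximum, over all root-to-leaf paths, of the sum of the sizes of the balanced separators used to build that path, so I would fix one such path and bound $\sum_x |S_x|$. Number the cuts along the path $t=1,2,\ldots$ in the order they are applied from the root downward, and let $\mathcal{V}_t$ denote the volume (number of vertices) of the subgrid being cut at step $t$, with $\mathcal{V}_1=\mathcal{V}$. Since each cut of Algorithm 2 bisects the current subgrid along one coordinate, we have $\mathcal{V}_{t+1}=\mathcal{V}_t/2$, hence $\mathcal{V}_t=\mathcal{V}/2^{t-1}$ (ignoring for the moment the rounding caused by odd side lengths). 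Writing $L_t$ for the longest side at step $t$, the separator is a coordinate hyperplane orthogonal to that side, so $|S_t|=\mathcal{V}_t/L_t$.

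First I would establish the pointwise estimate $|S_t|\le \mathcal{V}_t^{(d-1)/d}$. This is where the rule ``always cut the longest dimension'' pays off: the longest side is at least the geometric mean of all $d$ sides, i.e. $L_t\ge \mathcal{V}_t^{1/d}$, and therefore $|S_t|=\mathcal{V}_t/L_t\le \mathcal{V}_t/\mathcal{V}_t^{1/d}=\mathcal{V}_t^{(d-1)/d}$. Substituting $\mathcal{V}_t=\mathcal{V}/2^{t-1}$ and summing the resulting geometric series over all steps gives
\[
\sum_t |S_t| \;\le\; \sum_{t\ge 1}\Big(\frac{\mathcal{V}}{2^{t-1}}\Big)^{\frac{d-1}{d}} \;=\; \mathcal{V}^{\frac{d-1}{d}}\cdot\frac{1}{1-2^{-(d-1)/d}}.
\]
The series converges because $d\ge 2$, and the constant $\frac{1}{1-2^{-(d-1)/d}}$ is decreasing in $d$ with value at most $3.42$ at $d=2$; in particular it is at most $3d$ for every $d\ge 2$, which already supplies the numerical factor in the statement.

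It then remains to convert $\mathcal{V}^{(d-1)/d}$ into the stated form $\mathcal{V}/n_m$. Writing $\mathcal{V}^{(d-1)/d}=\mathcal{V}/\mathcal{V}^{1/d}$, it suffices that $\mathcal{V}^{1/d}\ge n_m$, i.e. that the geometric mean of the side lengths is at least $n_m$; combined with the constant bound above this yields the desired $h\le 3d\,\mathcal{V}/n_m$. This last step is, I expect, the real obstacle, since it is exactly the point where the relationship between $n_m$ and the individual side lengths must be pinned down. For the near-cubic subgrids that Algorithm 2 actually produces (all side lengths stay within a factor of two of one another, because we always halve the current longest side), one has $\mathcal{V}^{1/d}=\Theta(n_m)$ and the conversion holds with room to spare in the constant. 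A fully general argument must additionally absorb the rounding I suppressed above: when a side length is odd the two halves and the removed hyperplane do not split the volume into exact halves, so $\mathcal{V}_{t+1}=\mathcal{V}_t/2$ and $L_t=\mathcal{V}_t^{1/d}$ hold only approximately. The generous gap between the true constant (at most $4$) and the claimed $3d$ is precisely what makes this slack harmless, so I would track the floor/ceiling terms only far enough to confirm they are dominated by that gap. As a sanity check, specializing to the cube $n_1=\cdots=n_d=n$ recovers $h=O(n^{d-1})$, consistent with the sharper $\frac{2^d-1}{2^{d-1}-1}n^{d-1}$ obtained earlier by the explicit per-round computation; a direct generalization of that per-round sum also works, but only if one keeps the within-round geometric decay of the successive separators, as otherwise one loses a spurious factor of $d$.
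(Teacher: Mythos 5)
Your reduction to bounding $\sum_t |S_t|$ along a single root-to-leaf path is the right starting point, and the geometric-series computation is sound as far as it goes: it yields $h\le \mathcal{V}^{(d-1)/d}\bigl(1-2^{-(d-1)/d}\bigr)^{-1}$, which on the cube recovers the paper's $\frac{2^d-1}{2^{d-1}-1}n^{d-1}$ up to the constant. But the conversion step you yourself flag as ``the real obstacle'' is not a technicality to be absorbed by constant slack --- it is false, and in the reverse direction. Since $n_m$ is the \emph{maximum} side length, the geometric mean satisfies $\mathcal{V}^{1/d}=\bigl(\prod_i n_i\bigr)^{1/d}\le n_m$ always, with equality only for cubes, whereas you need $\mathcal{V}^{1/d}\ge n_m$. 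Consequently $\mathcal{V}^{(d-1)/d}\ge \mathcal{V}/n_m$: what you have proved is a bound \emph{weaker} than the lemma by the factor $n_m/\mathcal{V}^{1/d}$, which is unbounded over general grids. Concretely, for $d=2$, $n_1=n$, $n_2=\sqrt{n}$, your bound is $\Theta(n^{3/4})$ while the lemma asserts $O(\sqrt{n})$. The loss is incurred pointwise during the early, eccentric cuts: while one dimension dominates, the true separator size stays constant at $\mathcal{V}_t/L_t=\mathcal{V}/n_m$ per cut, but your estimate $\mathcal{V}_t^{(d-1)/d}$ is far larger there. Your ``near-cubic'' defense covers only the subgrids produced \emph{after} Algorithm 2 has equalized the aspect ratio; the lemma quantifies over arbitrary input grids, and the eccentric top-level cuts are exactly what the $\mathcal{V}/n_m$ form is about.

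The paper's proof works precisely where yours loses information: it parameterizes the eccentricity by integers $q_i$ with $2^{q_i-1}n_{i+1}<n_i\le 2^{q_i}n_{i+1}$, splits the construction into $d$ phases in which only the $i$ currently-longest dimensions are cut cyclically, computes each phase's forget-node count $h_i$ with its correct within-phase decay ratio $2^{-(i-1)}$ per round (in particular $h_1=q_1\mathcal{V}/n_1$ exactly, with \emph{no} decay in Phase 1), and bounds each $h_i\le 3\mathcal{V}/n_1$ before summing to $3d\mathcal{V}/n_1$. Your closing remark about keeping ``the within-round geometric decay of the successive separators'' is exactly this mechanism, but your main argument discards it in favor of the geometric-mean bound, which erases the aspect-ratio data that the target bound encodes. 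Note also that the Phase-1 term carries an explicit eccentricity factor $q_1$ that any proof of this lemma must confront (the paper controls it by a maximization over the $q_i$ at fixed volume); an argument in which $q_1$ never appears, like yours, cannot arrive at the claimed form. So this is a genuine gap, not a deferred rounding issue.
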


\begin{proof}
Assume that $2^{q_i-1}n_{i+1}< n_i\leq 2^{q_i}n_{i+1}$ for some integer $q_i\geq 0$ and $i=1,2,...,d-1$. Let $h(q_1,...,q_{d-1})$ be the maximum number of forget nodes from the root to a leaf in this case. We can think of the whole construction as in $d$ phases (the algorithm might do nothing in some phases).

In Phase 1, the grid/subgrid is halved in dimension $x_1$ in $q_i$ times. For $i=2,...,d$, suppose the lengths of dimension $x_1,x_2,...,x_{i-1},x_{i}$ are $n_1',n_2',...,n_{i-1}',n_i'=n_i$ respectively, we have $n_i'/2<n_1'\leq n_2'\leq\cdots\leq n_{i-1}'\leq n_i'$. For any $1\leq j\leq i-1$, if $n_j'=n_{j+1}'$, Algorithm 2 will first cut dimension $x_j$ then $x_{j+1}$. If $n_j' < n_{j+1}'$, Algorithm 2 will first cut dimension $x_{j+1}$ then $x_{j}$. In this way, we obtain a new partition order $x_1',...,x_i'$ which is a permutation of $x_1,...,x_i$. In Phase $i$ for $i\leq d-1$, the grid/subgrid is halved in dimension $x_1',x_2',...,x_i'$ consecutively in $q_i$ rounds. In Phase $d$, the algorithm repeats bipartitioning dimension $x_1',x_2',...,x_d'$ until the construction is completed. We denote the maximum number of forget nodes from the root to a leaf created in Phase $i$ by $h_i$. 

$i=1$. $h_1=\frac{q_1\mathcal{V}}{n_1}$. Notice that $n_1=2^{q_1+\cdots q_{d-1}}(\frac{\mathcal{V}}{2^{q_1+2q_2+\cdots +(d-1)q_{d-1}}})^{1/d}$, $h_1$ is maximized when $q_1=\frac{1}{\ln 2}\cdot\frac{d}{d-1}$ and $q_2=...=q_{d-1}=0$. We have $h_1\leq \frac{3\mathcal{V}}{n_1}$.

$i=2$. If $n_1=2^{q_1}n_{2}$, $h_2=\frac{\mathcal{V}}{n_1}\cdot(1+\frac{1}{2}+\cdots +\frac{1}{2^{q_2-1}})+\frac{\mathcal{V}}{n_2}\cdot(\frac{1}{2^{q_1+1}}+\frac{1}{2^{q_1+2}}+\cdots +\frac{1}{2^{q_1+q_2}})$, i.e. $h_2=\frac{\mathcal{V}}{n_1}\cdot\frac{2^2-1}{2-1}\cdot(1-\frac{1}{2^{(2-1)q_2}})\leq \frac{3\mathcal{V}}{n_1}$.

If $n_1< 2^{q_1}n_{2}$, Algorithm 2 will alternate to cut the $x_2$ dimension and $x_1$ dimension in $q_2$ rounds. $h_2=\frac{\mathcal{V}}{n_1}\cdot(\frac{1}{2}+\cdots +\frac{1}{2^{q_2}})+\frac{\mathcal{V}}{n_2}\cdot(\frac{1}{2^{q_1}}+\frac{1}{2^{q_1+1}}+\cdots +\frac{1}{2^{q_1+q_2-1}})$. Since $2^{q_1}n_2>n_1$, $h_2<\frac{\mathcal{V}}{n_1}\cdot(\frac{\frac{1}{2}(1-\frac{1}{2^{q_2}})}{1-1/2}+\frac{1-\frac{1}{2^{q_2}}}{1-1/2})< \frac{3\mathcal{V}}{n_1}$.

In general, for any $2\leq i\leq d-1$, we can bound $h_i$ as $h_i\leq \frac{\mathcal{V}}{2^{q_2+2q_3+\cdots+(i-2)q_{i-1}}n_1}\cdot(1+\frac{1}{2}+\cdots+\frac{1}{2^{i-1}})\cdot(1+\frac{1}{2^{i-1}}+\cdots+\frac{1}{2^{(i-1)(q_i-1)}})$. Hence, $h_i\leq \frac{\mathcal{V}}{2^{q_2+2q_3+\cdots+(i-2)q_{i-1}}n_1}\cdot\frac{2^i-1}{2^{i-1}-1}\cdot(1-\frac{1}{2^{(i-1)q_i}})$, which is at most $\frac{3\mathcal{V}}{n_1}$.


$i=d$. $h_d\leq \frac{2^{d}-1}{2^{d-1}-1}(\frac{\mathcal{V}}{2^{q_1+2q_2+\cdots +(d-1)q_{d-1}}})^{1-1/d}\leq \frac{3\mathcal{V}}{n_1}$.

Hence, $h(q_1,...,q_{d-1})=h_1+h_2+\cdots+h_d\leq \frac{3d\mathcal{V}}{n_1}$. $\square$
\end{proof}

Based on Lemma \ref{lemmagrid}, we give time complexity results of algorithms discussed in Section 4.1. First, $h$ is the only parameter to the running time of Algorithm 1 and the branching algorithm. Algorithm 1 runs in time $O^*(2^{\frac{3d\mathcal{V}}{n_m}})$ and the branching algorithm runs in time $O^*(3^{\frac{3d\mathcal{V}}{n_m}})$. The dynamic programming algorithm based on path decomposition on a subgrid has a running time $2^{O(\frac{\mathcal{V}}{(\log n_m)^{1/(d-1)}})}$. Those three algorithms have polynomial space complexity. For constant $d$, Algorithm~1 has the best time complexity. For the dynamic programming algorithm based on path decomposition, it runs in time $O^*(2^{\frac{\mathcal{V}}{n_m}})$ but in exponential space.

The result can easily be generalized to the $k$-nearest-neighbor ($k$NN) graphs and their subgraphs in $d$-dimensional space, as it is known that there exists a vertex separator of size $O(k^{1/d}n^{1-1/d})$ which splits the $k$NN graph into two disjoint parts with size at most $\frac{d+1}{d+2}n$ \cite{knn}.
More generally, we know that a nontrivial result can be obtained by Algorithm 1
if there exists a balanced separator of the graph $G$ with the following property. Let $s(n')$ be the size of a balanced separator $S$ on any subgraph $G'$ of $G$ of size $n'\leq n$. $S$ partitions the subgraph into two disjoint parts $G_1',G_2'$, such that $S\cup G_i'$ is of size at most $cn'$, for some constant $c\in(0,1)$, $i=1,2$. If there exists a constant $\gamma<1$, such that for every $n'\leq n$, $s(cn')\leq \gamma s(n')$, then the number of forget nodes along any path from the root to a leaf is at most $s(n)+\gamma s(n)+\gamma^2 s(n)+\cdots\leq \frac{s(n)}{1-\gamma}$.
In this case, the tree decomposition of treewidth $k$ constructed by Algorithm 2 has the tree-depth $h=\Theta(k)$. For $k=\Omega(\log n)$, Algorithm 1 has a similar running time as the standard dynamic programming algorithm but with much better space complexity. \\


\textbf{Computing the matching polynomial.} The matching polynomial of a graph $G$ is defined to be $m[G, \lambda]=\sum_{i=0}^{|G|/2} m^i[G]\lambda^i$, where $m^i[G]$ is the number of matchings of size $i$ in graph $G$. We put the coefficients of $m[G, \lambda]$ into a vector $\mathbf{m}[G]$. The problem is essentially to compute the coefficient vector $\mathbf{m}[G]$.

For every node $x$ in a tree decomposition, let vector $\mathbf{m}_x[X]$ be the coefficient vector of the matching polynomial defined on $Y_X$. Notice that every entry of $\mathbf{m}_x[X]$ is at most $|E|^{|V|/2}$ and all constants are singletons.
$\mathbf{m}^0_x[X]=1$ and $\mathbf{m}^i_x[X]=0$ for $i>|X|/2$. The case of $x$ being a forget vertex node follows exactly from Algorithm 1. For any type of tree node $x$, 

- $x$ is a leaf node. $\mathbf{m}^i_x[\emptyset]=1$ if $i=0$, or 0 otherwise.

- $x$ is an introduce vertex node. If $v\in X$, $\mathbf{m}^i_x[X] =\mathbf{m}^i_c[X\setminus \{v\}]$. Hence $(\zeta\mathbf{m}^i_x)[X]=2(\zeta\mathbf{m}^i_c)[X\setminus\{v\}]$ if $v\in X$, or $(\zeta\mathbf{m}^i_x)[X]=(\zeta\mathbf{m}^i_c)[X]$ otherwise.

- $x$ is an auxiliary leaf of a modified introduce edge node. $\mathbf{m}_{x}^i[X]=1$ only when $u,v\in X$ and $i=1$, or $i=0$. Otherwise it is 0.

- $x$ is a join node. $\mathbf{m}_x^i[X]=\sum_{X'\subseteq X}\sum_{j=0}^i \mathbf{m}_{c_1}^j[X']\mathbf{m}_{c_2}^{i-j}[X\setminus X']$. \\

{\bf Counting $l$-packings.} Given a universe $U$ of elements and a collection of subsets $\mathcal{S}$ on $U$, an $l$-packing is a collection of $l$ disjoint sets. The $l$-packings problem can be solved in a similar way as computing the matching polynomial. Packing problems can be viewed as matching problems on hypergraphs. Tree decomposition on graphs can be generalized to tree decomposition on hypergraph, where we require every hyperedge to be assigned to a specific bag \cite{hypertree}. A hyperedge is introduced after all vertices covered by this edge are introduced. \\

\textbf{Counting dominating sets, counting set covers.} The set cover problem is given a universe $U$ of elements and a collection of sets $\mathcal{S}$ on $U$, find a subcollection of sets from $\mathcal{S}$ which covers the entire universe $U$. The dominating set problem is defined on a graph $G=(V,E)$. Let $U=V$, $\mathcal{S}=\{N[v]\}_{v\in V}$, where $N[v]$ is the union of the neighbors of $v$ and $v$ itself. The dominating set problem is to find a subset of vertices $S$ from $V$ such that $\bigcup_{v\in S} N[v]$ covers $V$.

The set cover problem can be viewed as a covering problem on a hypergraph, where one selects a collection of hyperedges which cover all vertices. The dominating set problem is then a special case of the set cover problem. If $\mathcal{S}$ is closed under subsets, a set cover can be viewed as a disjoint cover. We only consider the counting set covers problem. For any subset $X\subseteq B_x$, we define $h_x[X]$ to be the number of set covers of $Y_X$. We have $h_x[X]\leq |U|^{|\mathcal{S}|}$, and all constants are singletons. We omit the recurrence for forget vertex nodes as we can directly apply recurrence (\ref{forget}) in Algorithm 1. For any node $x$, $h_x[\emptyset]=1$.

- $x$ is a leaf node. $h_x[\emptyset]=1$.

- $x$ is an introduce vertex node. If $v\in X$, $h_x[X]=0$. If $v\notin X$, $h_x[X]=h_c[X]$.

- $x$ is an auxiliary leaf of a modified introduce hyperedge node. $h_{x}[X]=1$ when $X\subseteq e$, and $h_x[X]=0$ otherwise.

- $x$ is a join node. $h_x[X]=\sum_{X'\subseteq X}h_{c_1}[X']h_{c_2}[X-X']$. \\



Finally, we point out that our framework has its limitations. First, it cannot be applied to problems where the computation on a join node cannot be formalized as a convolution. The maximum independent set problem is an example. Also it is not known if there is a way to adopt the framework to the Hamiltonian path problem, the counting $l$-path problems, and the unweighted Steiner tree problem. It seems that for theses problems we need a large storage space to record intermediate results. It is interesting to find more problems which fit in our framework.

\section{Conclusion}
\label{conclusion}

We study the problem of designing efficient dynamic programming algorithms based on tree decompositions in polynomial space. We show how to construct a modified nice tree decomposition $\mathcal{T}$ and extend the algebraic techniques in \cite{savespace2010} to dynamic sets such that we can run the dynamic programming algorithm in time $O^*(2^h)$ and in polynomial space, with $h$ being the maximum size of the union of bags along any path from the root to a leaf of $\mathcal{T}$, a parameter closely related to the tree-depth of a graph \cite{treedepth}. We apply our algorithm to many problems. It is interesting to find more natural graphs with nontrivial modified nice tree decompositions, and to find more problems which fit in our framework.

\bibliographystyle{plain}
\bibliography{twdp}

\end{document}